\documentclass{llncs}
\usepackage[utf8]{inputenc}
\pagestyle{plain}
\usepackage{amssymb,amsmath,bm}
\usepackage{graphicx}
\usepackage{listings}
\usepackage{environ}
\usepackage{xparse}

\usepackage[linesnumbered,ruled,vlined]{algorithm2e}
\SetKwInput{KwInput}{Input}                
\SetKwInput{KwOutput}{Output}              

\usepackage{xcolor}
\usepackage[color=yellow!20]{todonotes}

\newcommand{\norm}[1]{\left\lVert#1\right\rVert}

\RequirePackage{fix-cm}
\DeclareMathSizes{10}{10}{6}{4}
\title{Obtuse Lattice Bases}
\author{Kanav Gupta$^1$ \and Mithilesh Kumar$^2$ \and Håvard Raddum$^2$}
\institute{$^1$Indian Institute of Technology Roorkee, India\\
$^2$Simula UiB, Norway}

\begin{document}

\maketitle

\begin{abstract}
A lattice reduction is an algorithm that transforms the given basis of the lattice to another lattice basis such that problems like finding a shortest vector and closest vector become easier to solve.  We define a class of bases called \emph{obtuse bases} and show that any lattice basis can be transformed to an obtuse basis. A shortest vector $\bm{s}$ can be written as $\bm{s}=v_1\bm{b}_1+\dots+v_n\bm{b}_n$ where $\bm{b}_1,\dots,\bm{b}_n$ are the input basis vectors and $v_1,\dots,v_n$ are integers. When the input basis is obtuse, all these integers can be chosen to be positive for a shortest vector.  This property of obtuse bases makes the lattice enumeration algorithm for finding a shortest vector exponentially faster.  We have implemented the algorithm for making bases obtuse, and tested it some small bases.   
\end{abstract}

\section{Introduction}
With the advent of quantum computing, many of the classical cryptosystems like RSA might be rendered insecure in the future.  There have been significant advances in the development of Post-Quantum Cryptosystems, which are resistant to attacks based on quantum computing.  One such significant step in Post-Quantum Cryptography is lattice based cryptography.  Lattice based cryptography was introduced by Ajtai in his work \cite{ajtailattice} where it was shown that the hardness of well-known lattice problems can be used as a basis in designing cryptosystems. The problem is also thought to be hard enough to not be efficiently solvable by a quantum computer. One such hard problem is the Shortest Vector Problem (SVP), where we have to find the shortest vector that can be formed as a non-trivial linear combination of the basis vectors.  This problem was proved to be NP-hard \cite{Ajtai1998TheSV} for different norms, hence it is used as a basis of several cryptosystems.

With research going on in making the SVP-based crypto systems strong, there has also been progress in trying to break them.  Two such primary algorithms are enumeration and sieving. We are interested in the former algorithm in the scope of this paper. Enumeration is a way of recursively visiting all the lattice vectors of some bounded length, thereby finding the shortest one.

\textbf{Our contribution.} In this paper, we present our work on a class of lattice bases called Obtuse Bases, introduced in \cite{kumar2019faster}.  We show how to convert a regular basis to an obutse basis and how we can use this basis to speed up the enumeration process.  In Section \ref{sec:prelim} we briefly summarize lattice reduction and enumeration concepts. In Section \ref{sec:obtuse} we formally define obtuse basis, and explain the speedup we get in enumeration with the help of obtuse bases. We also present a way to convert any basis into an obtuse one. In the Section \ref{sec:implementation} we explain our implementation details and post the results. Finally we conclude the paper in Section \ref{sec:conc}.

\section{Preliminaries}\label{sec:prelim}

\subsection{Lattice concepts and reduction}

We start this section with some preliminaries and introduction to basic lattice concepts.

\begin{definition}
    A \textbf{lattice} is a discrete subgroup of $\mathbb{R}^d$.  A lattice $\mathcal{L}$ can be defined by a \textbf{basis} $\mathcal{B}$, which is a set of linearly independent vectors $\{\bm{b}_1,\dots,\bm{b}_n\}$ in $\mathbb{R}^d$.  Any element of $\mathcal{L}$ can be represented as a linear combination $\sum_{i=1}^n v_i\bm{b}_i$ where $v_i \in \mathbb{Z}$ $ \forall i \in [n]$.
\end{definition}

We will now explain some of the properties of lattices. We will limit our study to the lattices where $d = n$, and represent a basis as an $n \times n$ matrix where the rows of the matrix represent the individual basis vector.

\begin{definition}
    The volume of the parallelepiped formed by the basis vectors has a value of $|\det(\mathcal{B})|$ and is called the \textbf{volume of the lattice}.
\end{definition}

An important fact about the volume of the lattice is that it does not depend on the choice of basis for a given lattice.

\begin{definition}
    For a lattice $\mathcal{L}$, $\lambda_1(\mathcal{L})$ is the minimum of the distances between two distinct points of the lattice. That is $\lambda_1(\mathcal{L}) = \min \limits_{\forall \bm{p}_1, \bm{p}_2 \in \mathcal{L}} \norm{\bm{p}_1 - \bm{p}_2}$.
\end{definition}

In general, $\lambda_l(\mathcal{L})$ is the smallest $r > 0$, such that $\mathcal{L}$ contains at least $l$ linearly independent vectors with norm bounded by $r$.

There are some approximations and bounds on the value of $\lambda_1$. Two of the most notable bounds are given by the Minkowsi Theorem and the Gaussian Heuristic.

\begin{theorem}
(Minkowski Theorem) For a lattice $\mathcal{L}$
\begin{equation*}
    \lambda_1(\mathcal{L}) \leq 2\times vol(\mathcal{L})^{1/n} \times V_n(1)^{-1/n}
\end{equation*}
where $V_n(1)$ is the volume of the $n$-dimensional Euclidean sphere of radius 1. 
\end{theorem}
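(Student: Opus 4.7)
The plan is to derive this from Minkowski's Convex Body Theorem, which is the standard route. First I would recall that theorem: if $S \subseteq \mathbb{R}^n$ is centrally symmetric, convex, and has Lebesgue volume strictly greater than $2^n \cdot vol(\mathcal{L})$, then $S$ contains a nonzero lattice point of $\mathcal{L}$. I would take this as the main tool rather than reprove it, though in a full writeup one would first establish Blichfeldt's lemma (any measurable set with volume exceeding $vol(\mathcal{L})$ contains two distinct points whose difference lies in $\mathcal{L}$) and then apply it to the scaled body $\tfrac{1}{2}S$ to get the convex body statement via $x - y = \tfrac{1}{2}(2x) + \tfrac{1}{2}(-2y)$ and symmetry.

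With the convex body theorem in hand, I would instantiate it on the closed Euclidean ball $B(0, r) = \{\bm{x} \in \mathbb{R}^n : \norm{\bm{x}} \leq r\}$, which is visibly convex and centrally symmetric. Its volume is $r^n \cdot V_n(1)$. I would choose $r$ just large enough that
\begin{equation*}
r^n \cdot V_n(1) > 2^n \cdot vol(\mathcal{L}),
\end{equation*}
i.e. $r > 2 \cdot vol(\mathcal{L})^{1/n} \cdot V_n(1)^{-1/n}$. For any such $r$, the convex body theorem produces a nonzero lattice vector $\bm{v} \in B(0,r) \cap \mathcal{L}$, whence $\lambda_1(\mathcal{L}) \leq \norm{\bm{v}} \leq r$. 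Letting $r$ approach the bound from above (using that $\lambda_1$ is attained by a discrete set and hence the inequality passes to the infimum) yields
\begin{equation*}
\lambda_1(\mathcal{L}) \leq 2 \cdot vol(\mathcal{L})^{1/n} \cdot V_n(1)^{-1/n}.
\end{equation*}

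The main obstacle, if one wants a self-contained argument, is really the convex body theorem itself, and inside it the Blichfeldt step, which relies on measure-theoretic facts about translates of a set modulo the fundamental parallelepiped of $\mathcal{L}$. The rest of the argument is essentially a volume computation and a limiting step; the only subtlety is the strict vs.\ non-strict inequality, handled either by the closed ball being compact (so the infimum is achieved) or by taking a decreasing sequence of radii $r_k$ converging to the bound and extracting a nonzero lattice point from the intersection, which is finite by discreteness of $\mathcal{L}$.
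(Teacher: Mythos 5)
Your argument is correct and is the standard textbook derivation: Blichfeldt's lemma, then the Convex Body Theorem, then instantiation on the Euclidean ball of radius $r$ with $r^n V_n(1) > 2^n \mathrm{vol}(\mathcal{L})$, and finally the limiting step justified by discreteness of the lattice. The paper itself states Minkowski's Theorem as a known preliminary and gives no proof, so there is nothing to compare against; your writeup supplies exactly the classical argument one would expect, and the one subtlety you flag (passing from strict to non-strict inequality) is handled correctly.
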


This bound on the value of the shortest vector is called the Minkowski Bound.

According to the Gaussian Heuristic, for a lattice $\mathcal{L}$ and a set $\mathcal{S}$, the expected number of points in $\mathcal{L} \cap \mathcal{S}$ is approximately equal to $\mbox{vol}(\mathcal{S})/\mbox{vol}(\mathcal{L})$. By setting $\mathcal{S}$ to be an Euclidean sphere of radius $\lambda_1(\mathcal{L})$, we get an approximation of the value of $\lambda_1(\mathcal{L})$ to be
\begin{equation*}
    \lambda_1(\mathcal{L}) \approx \sqrt{n/2\pi e} \times \mbox{vol}(\mathcal{L})^{1/n}
\end{equation*}

As lattices can be arbitrary, basis vectors might contain huge numbers and can be far from orthogonal from each other. Though complete orthogonality is not possible, two algorithms - LLL and BKZ - are very successful in both reducing the size of the numbers inside the basis and making the basis vectors more orthogonal with each other.  That is, the dot product of two basis vectors is minimized in absolute value.  LLL or BKZ is normally run before running the algorithms to solve SVP or CVP, to reduce their time complexity. 

\subsection{Enumeration}

The next set of important lattice algorithms to discuss are SVP solvers.  In particular, lattice enumeration will be of interest in the scope of this paper.  Enumeration is a recursive algorithm which visits all the points in a lattice of norm smaller than a given bound $R$.  This bound is calculated such that it is greater than $\lambda_1$. In an enumeration traversal, we fix a particular set of integer coefficients and try to change the next coefficient as long as the norm of the shortest vector that can possibly be made from these coefficients is within the given bound. 
\begin{definition}
    For a lattice $\mathcal{L}$ with basis $\{\bm{b}_1, \bm{b}_2, \dots, \bm{b}_n\}$, we define $\pi_i(\mathcal{L})$ to be the projection of lattice $\mathcal{L}$ onto the orthogonal space spanned by the basis vectors $\bm{b}_1,\ldots,\bm{b}_{i-1}$. We also define $\pi_i(\bm{s})$ to be the projection of the vector $\bm{s}$ onto $\pi_i(\mathcal{L})$.
\end{definition}

To find the shortest vector $\bm{s}$, the algorithm goes through an \textbf{enumeration tree}, formed by the points with norm less than a parameter $R$ as leaves and projections of the lattice vectors onto various subspaces as the inner nodes.  At the root of the tree, we have an empty set of vectors as $\pi_{n+1}(\mathcal{L}) = {0}$.  A node at depth $k$ represents vectors in $\pi_{n+1-k}(\mathcal{L})$ with the potential of developing into a shortest lattice vector.  To put it simply, if a general vector is represented as $\bm{s} = \sum_{i=1}^{n} v_i\bm{b}_i$, at depth $k$, we assume we have already fixed the $k$ integers $v_{n-k+1},\ldots,v_{n}$.  We then find the range of the $(k+1)$-th coefficient $v_{n-k}$, such that the norm of a the vector $\pi_k(\mathbf{s})$ is less than $R$. Then for each of the integers in this interval we visit a child node with $v_{n-k}$ set to that value.  This process is now repeated with $k+1$ out of the $n$ integers $\{ v_{n - k}, \ldots, v_{n} \}$ fixed.  When we reach $k=n$ we have produced a lattice vector of norm smaller than $R$.

\begin{center}
    
\begin{algorithm}[H]\label{enumeration}
\DontPrintSemicolon
\KwInput{A basis $\mathcal{B}=\{\bm{b}_1,\dots,\bm{b}_n\}$, Norm limit $R$}
\KwOutput{Shortest vector  $\bm{s} = \sum_{i=1}^n v_i \bm{b}_i$}
\SetKwFunction{FMain}{EnumerationVisit}
\SetKwProg{Fn}{Function}{:}{}
\Fn{\FMain{$v, k$}}{
    \If{k == n}{
        Find and compare norm of $v$ with current minimum \\
        \textbf{return}
    }
    Find interval $[a, b]$ of possible values for $v_{n-k}$ \\
    \For{Integer $i$ in $[a, b]$}{
        Set $v_{n-k}$ := $i$ \\
        \FMain{$v, k+1$}
    }
}

\SetKwFunction{FRun}{Enumeration}
\Fn{\FRun{}}{
    Initialize $v$ as a zero vector \\
    \FMain{$v, 0$}
}
\caption{Enumeration}
\end{algorithm}
\end{center}

Though Algorithm \ref{enumeration} explains the flow of enumeration, some details like the calculation of intervals for the $v_k$ are omitted. The interval is calculated with the help of Gram-Schmidt orthogonalization of the basis matrix and is explained in the works of Gama et al \cite{10.1007/978-3-642-13190-5_13}. They also advise to iterate through the integers in the interval in a radial fashion - we start at the integer nearest to the center of the interval, and then we test the other numbers by alternately taking the next untested number closest to the center.  Analysis show that the running time complexity of enumeration is $2^{O(n^2)}$, which is improved exponentially by Schnorr et al \cite{prune} using the pruning technique.

\section{Obtuse Bases}\label{sec:obtuse}

We start this section with the definition of obtuse bases and show that when the basis is obtuse we can exponentially reduce the run-time of the enumeration algorithm.

\begin{definition}[Obtuse basis \cite{kumar2019faster}]
	Let $\mathcal{B}=\{\bm{b}_1,\dots,\bm{b}_n\}$ be a basis for a lattice $\mathcal{L}$. The basis $\mathcal{B}$ is called \emph{obtuse} if for all $\bm{b}_i\neq\bm{b}_j\in \mathcal{B}$, we have $\bm{b}_i\cdot\bm{b}_j\leq 0$.
\end{definition}

We start with proving a simple yet powerful relation between a shortest vector and an obtuse basis.
	\begin{lemma}\label{samesign}
		Let $\mathcal{B}$ be an obtuse basis of a lattice $\mathcal{L}$ and $\bm{s} = \sum_{i=1}^n v_i\bm{b}_i$ be a shortest vector in $\mathcal{L}$.  Then, $\forall i\in [n]~v_i\geq 0$ or $\forall i\in [n]~v_i\leq 0$.
	\end{lemma}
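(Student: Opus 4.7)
The plan is to argue by contradiction. Suppose that $\bm{s}=\sum_{i=1}^n v_i \bm{b}_i$ is a shortest vector but the claim fails, i.e.\ there exist indices with $v_i>0$ and indices with $v_j<0$. I would then split the index set $[n]$ into $P=\{i:v_i>0\}$ and $N=\{i:v_i<0\}$ (indices with $v_i=0$ can be placed anywhere, say in neither) and define the two lattice vectors
\[
\bm{s}_P \;=\; \sum_{i\in P} v_i \bm{b}_i, \qquad \bm{s}_N \;=\; \sum_{j\in N} v_j \bm{b}_j ,
\]
both of which are nonzero by assumption, and which satisfy $\bm{s}=\bm{s}_P+\bm{s}_N$.

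The key computation is to expand $\|\bm{s}\|^2=\|\bm{s}_P\|^2+\|\bm{s}_N\|^2+2(\bm{s}_P\cdot \bm{s}_N)$ and analyze the cross term. Writing it out,
\[
\bm{s}_P\cdot \bm{s}_N \;=\; \sum_{i\in P,\, j\in N} v_i v_j\,(\bm{b}_i\cdot \bm{b}_j),
\]
each factor $v_i v_j$ is strictly negative (opposite signs) and each dot product $\bm{b}_i\cdot \bm{b}_j$ is $\leq 0$ because $\mathcal{B}$ is obtuse, so every summand is nonnegative. Hence $\bm{s}_P\cdot \bm{s}_N \geq 0$, which yields the inequality $\|\bm{s}\|^2 \geq \|\bm{s}_P\|^2 + \|\bm{s}_N\|^2$.

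Now I would close the argument using the shortness of $\bm{s}$. Since $\bm{s}_P$ and $\bm{s}_N$ are nonzero lattice vectors, both have norm at least $\|\bm{s}\|$, so $\|\bm{s}_P\|^2+\|\bm{s}_N\|^2 \geq 2\|\bm{s}\|^2$. Combining gives $\|\bm{s}\|^2 \geq 2\|\bm{s}\|^2$, forcing $\bm{s}=\bm{0}$, which contradicts $\bm{s}$ being a shortest (hence nonzero) lattice vector. Therefore $P$ and $N$ cannot both be nonempty, proving the claim.

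I do not anticipate a real technical obstacle here; the only subtle point is recognizing that the right decomposition is by the sign of the coefficients rather than trying to flip the signs of $\bm{s}$ to obtain a strictly shorter vector (which, as a sanity check, only gives $\|\sum_i |v_i|\bm{b}_i\| \leq \|\bm{s}\|$ and would leave open the possibility of equality). Splitting $\bm{s}$ additively into its positive and negative parts, and exploiting the fact that each part is on its own a lattice vector no shorter than $\bm{s}$, is what makes the contradiction sharp.
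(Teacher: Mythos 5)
Your proof is correct, but it takes a genuinely different (and sharper) route than the paper. The paper simply expands $\norm{\bm{s}}^2 = \sum_i v_i^2\norm{\bm{b}_i}^2 + \sum_{i\neq j} v_iv_j\,\bm{b}_i\cdot\bm{b}_j$ and observes that, with the $|v_i|$ fixed, the cross terms are minimized when all products $v_iv_j\geq 0$; it then concludes the signs must agree. As you anticipate in your closing remark, that comparison only yields the non-strict inequality $\norm{\sum_i |v_i|\bm{b}_i}\leq\norm{\bm{s}}$, so taken literally it shows that \emph{some} shortest vector has same-sign coefficients, not that \emph{every} shortest vector does (equality can occur, e.g.\ when the relevant dot products vanish). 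Your additive decomposition $\bm{s}=\bm{s}_P+\bm{s}_N$ closes exactly this gap: the obtuseness gives $\bm{s}_P\cdot\bm{s}_N\geq 0$, hence $\norm{\bm{s}}^2\geq\norm{\bm{s}_P}^2+\norm{\bm{s}_N}^2$, and applying the minimality of $\norm{\bm{s}}$ to the two nonzero lattice vectors $\bm{s}_P,\bm{s}_N$ (nonzero because $P,N\neq\emptyset$ and the $\bm{b}_i$ are linearly independent --- worth stating explicitly) forces $\norm{\bm{s}}^2\geq 2\norm{\bm{s}}^2$, a contradiction. What your approach buys is a rigorous proof of the statement as written, for every shortest vector; what the paper's approach buys is brevity and a template that transfers directly to the semi-obtuse case (its Lemma 2), where your splitting argument would need the sets $P$ and $N$ to be chosen relative to the partition $\mathcal{B}_1,\mathcal{B}_2$ but would otherwise go through as well.
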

\begin{proof}
Computing $\bm{s}\cdot\bm{s}$, we get $\norm{\bm{s}}^2 = \sum_{i = 1}^n v_i^2\norm{\bm{b}_i}^2 + \sum_{i\neq j} v_i v_j \bm{b}_i\cdot \bm{b}_j$. Since for all $i\neq j \quad\bm{b}_i\cdot \bm{b}_j \leq 0$, the above sum is the smallest possible when for all $i\neq j\quad v_i v_j \geq 0$. This implies that either $\forall i\in [n]~v_i\geq 0$ or $\forall i\in [n]~v_i\leq 0$.
\end{proof}

We remark that the notion of obtuse basis can be generalized to what we call semi-obtuse basis, while still keeping an equally useful version of Lemma \ref{samesign}.

\begin{definition}[Semi-obtuse basis]
	Let $\mathcal{B}=\{\bm{b}_1,\dots,\bm{b}_n\}$ be a basis for a lattice $\mathcal{L}$. The basis $\mathcal{B}$ is called \emph{semi-obtuse} if the vectors in $\mathcal{B}$ can be split into two disjoint subsets $\mathcal{B}_1,\mathcal{B}_2$ such that $\bm{b}_i\cdot\bm{b}_j\leq 0$ when $\bm{b}_i,\bm{b}_j\in\mathcal{B}_1$ or $\bm{b}_i,\bm{b}_j\in\mathcal{B}_2$, and $\bm{b}_i\cdot\bm{b}_j\geq 0$ when $\bm{b}_i\in\mathcal{B}_1$ and $\bm{b}_j\in\mathcal{B}_2$.
\end{definition}

With the definition of a semi-obtuse basis we see that an obtuse basis is just the special case when $\mathcal{B}_1=\mathcal{B}$ and $\mathcal{B}_2=\emptyset$.  The following lemma gives the signs of the coefficients for a shortest vector in a lattice spanned by a semi-obtuse basis.
\begin{lemma}\label{semiobtuse}
	Let $\mathcal{B}$ be a semi-obtuse basis of a lattice $\mathcal{L}$ and $\bm{s} = \sum_{i=1}^n v_i\bm{b}_i$ be a shortest vector in $\mathcal{L}$. Then, $v_i\geq 0$ for all $i$ such that $\bm{b}_i\in\mathcal{B}_1$ and $v_j\leq 0$ for all $j$ such that $\bm{b}_j\in\mathcal{B}_2$.
\end{lemma}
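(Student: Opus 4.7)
The plan is to mirror the proof of Lemma~\ref{samesign}. Expand
\[
\norm{\bm{s}}^2 = \sum_{i=1}^n v_i^2\,\norm{\bm{b}_i}^2 + \sum_{i\neq j} v_iv_j\,\bm{b}_i\cdot\bm{b}_j,
\]
and ask which sign pattern on the $v_i$ minimises the cross-term sum, given the sign constraints on $\bm{b}_i\cdot\bm{b}_j$ imposed by the semi-obtuse hypothesis. The candidate pattern suggested by the statement is $\epsilon_i=+1$ for $\bm{b}_i\in\mathcal{B}_1$ and $\epsilon_i=-1$ for $\bm{b}_i\in\mathcal{B}_2$, so I would introduce the ``sign-corrected'' lattice vector $\bm{s}' = \sum_{i=1}^n \epsilon_i |v_i|\,\bm{b}_i$ (still in $\mathcal{L}$, since its coefficients are integers) and compare $\norm{\bm{s}'}^2$ with $\norm{\bm{s}}^2$ term by term.

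The diagonal parts agree. For a cross term with $\bm{b}_i,\bm{b}_j$ in the same subset, the semi-obtuse property gives $\bm{b}_i\cdot\bm{b}_j\leq 0$ while $\epsilon_i\epsilon_j|v_i||v_j| = |v_iv_j|\geq v_iv_j$, so the contribution in $\bm{s}'$ is at most that in $\bm{s}$. For a cross term whose indices lie in different subsets, $\bm{b}_i\cdot\bm{b}_j\geq 0$ and $\epsilon_i\epsilon_j|v_i||v_j| = -|v_iv_j|\leq v_iv_j$, again giving a contribution at most that in $\bm{s}$. Summing over all pairs yields $\norm{\bm{s}'}^2 \leq \norm{\bm{s}}^2$. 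Because $\bm{s}$ is a shortest vector, equality must hold throughout, and in particular for every pair $(i,j)$ with $\bm{b}_i\cdot\bm{b}_j\neq 0$ we must have $v_iv_j = \epsilon_i\epsilon_j|v_iv_j|$, which is exactly the claimed sign pattern for that pair. Together with the observation that negating $\bm{s}$ yields another shortest vector, one arrives at the statement in the lemma.

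The main subtlety I expect is that coefficients associated with pairs of basis vectors that happen to be orthogonal are not pinned down by the equality argument. The clean way to handle this, and the way I would phrase the conclusion, is to note that $\bm{s}'$ itself is a shortest vector that manifestly satisfies the sign requirement, so one may simply replace $\bm{s}$ by $\bm{s}'$ (or $-\bm{s}'$) and read the lemma as an existence statement for a shortest vector with the prescribed sign pattern, exactly parallel to how Lemma~\ref{samesign} is used.
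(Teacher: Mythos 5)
Your proposal is correct and follows essentially the same route as the paper's proof: expand $\norm{\bm{s}}^2$ and argue that the cross terms are minimised exactly by the sign pattern $\epsilon_i=+1$ on $\mathcal{B}_1$ and $\epsilon_i=-1$ on $\mathcal{B}_2$. In fact your version is tighter than the paper's: the explicit comparison vector $\bm{s}'=\sum_i\epsilon_i|v_i|\bm{b}_i$ and your caveat about orthogonal pairs fix a real imprecision in the published argument, which claims that \emph{any} other sign configuration strictly increases the norm --- false when some $\bm{b}_i\cdot\bm{b}_j=0$ --- so the lemma is indeed best read (as you say) as asserting the existence of a shortest vector with the prescribed signs, up to the global negation giving $-\bm{s}$.
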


\begin{proof}
    Let a shortest vector be written as $\bm{s}=v_1\bm{b}_1+\ldots+v_n\bm{b}_n$.  The squared length of $\bm{s}$ can be written as
    \[\norm{\bm{s}}^2 = \sum_{i=1}^n v_i^2\norm{\bm{b}_i}^2 + \sum_{\bm{b}_i,\bm{b}_j\in\mathcal{B}_1} v_iv_j\bm{b}_i\cdot\bm{b}_j + 
    \sum_{\bm{b}_i,\bm{b}_j\in\mathcal{B}_2} v_iv_j\bm{b}_i\cdot\bm{b}_j + \sum_{\bm{b}_i\in\mathcal{B}_1,\bm{b}_j\in\mathcal{B}_2} v_iv_j\bm{b}_i\cdot\bm{b}_j.\]
    Similar to the proof of Lemma \ref{samesign}, all terms in the three last sums will be negative when $v_i\geq 0$ for all $i$ such that $\bm{b}_i\in \mathcal{B}_1$, and $v_j\leq 0$ for $j$ such that $\bm{b}_j\in\mathcal{B}_2$.  Similarly, flipping the signs of all coefficients will produce $-\bm{s}$, which has equal norm.   Any other configuration of signs among the coefficients will give some positive terms in the last three sums, without changing their absolute values.  Hence the squared norm must be larger in these cases. \qed
\end{proof}

Now we explain how a semi-obtuse basis can easily be transformed into an obtuse basis.  Let $\bm{b}_j\in\mathcal{B}_2$.  This means that $\bm{b}_j\cdot\bm{b}_i\leq0$ for $\bm{b}_i\in\mathcal{B}_2$ and $\bm{b}_j\cdot\bm{b}_i\geq0$ for $\bm{b}_i\in\mathcal{B}_1$.  Replacing $\bm{b}_j$ with $-\bm{b}_j$ will reverse these inequalities.  Removing $\bm{b}_j$ from $\mathcal{B}_2$ and adding $-\bm{b}_j$ to $\mathcal{B}_1$ will therefore retain the property of being semi-obtuse.  Hence a semi-obtuse basis can be turned into an obtuse basis by replacing all vectors in either $\mathcal{B}_1$ or $\mathcal{B}_2$ with their negative counterpart.

From this we see that it is equally hard to obtain an obtuse basis as a semi-obtuse basis.  Moreover, as flipping signs of basis vectors neither changes their lengths or orthogonality, we can restrict ourselves to only consider obtuse bases in the rest of the paper.

We can generalize this idea of flipping signs to an algorithm to convert a basis to an obtuse base, but which only works on a subset of bases. Consider a graph $\mathcal{G}$ which contains a node $p_i$ for every basis vector $\bm{b}_i$ and there is an edge between $p_i$ and $p_j$, if and only if $\bm{b}_i \cdot \bm{b}_j \leq 0$. Our goal is to convert this graph into a complete graph.

We start with a sub-graph which contains a single arbitrary point. We iterate over all points in the graph (in any random order) and at each step, we check whether the point is obtuse with all the points in the sub-graph. If it is, we include the point inside the sub-graph. If it is not, we negate the point and then check if it is obtuse with all the points in the sub-graph. If it still is not, we conclude that this basis cannot be converted to an obtuse basis using this flipping algorithm and abort the algorithm. Otherwise, we include the point in the sub-graph. As the points inside the sub-graph are always fully-connected, this sub-graph is a clique. This way, we iteratively run this algorithm to increase the size of the clique until its size grows equal to the basis and we get an obtuse basis.

We will now discuss an integer linear programming (ILP) based method which iteratively changes the basis, using unimodular operations, one basis vector at a time such that the sub-basis becomes obtuse. We will also maintain a similar clique inside this algorithm to keep a track of the transformed vectors. Repeating for all vectors in the basis will transform any basis into an obtuse basis.  We also discuss some tweaks to speed up the overall algorithm.

\subsection{Iterative Algorithm for making basis obtuse}
\label{iterativealg}

Let us assume that the first $k$ vectors of the $n$ vectors of the basis $\mathcal{B}$ are obtuse to each other i.e. $\forall i, j \in [k], \bm{b}_i\cdot \bm{b}_j \leq 0$. We now want to change $\bm{b}_{k+1}$ to $\bm{b}'_{k+1}$ such that $\forall i \in [k], \bm{b}_i\cdot \bm{b}'_{k+1} \leq 0$. All the operations need to be uni-modular so that the volume of the lattice does not change, hence, $\bm{b}'_{k+1}$ must be of the form $\bm{b}_{k+1} + v_1 \bm{b}_1 + v_2 \bm{b}_2 \dots v_k \bm{b}_k$. This turns into an ILP problem - find $ \bm{v} = \{v_1, v_2, \ldots, v_k\}$ such that $\forall i \in [k], \bm{b}_i\cdot \bm{b}'_{k+1} \leq 0$.  For $i\in [k]$ we get the following inequalities:

\begin{equation}
    \bm{b}_i\cdot \bm{b}'_{k+1} = \bm{b}_i \cdot (\bm{b}_{k+1} + v_1 \bm{b}_1 + v_2 \bm{b}_2 \dots v_k \bm{b}_k ) \leq 0
\end{equation}
\begin{equation}
    v_1 \bm{b}_i \cdot \bm{b}_1 + v_2 \bm{b}_i \cdot \bm{b}_2 \dots v_k \bm{b}_i \cdot \bm{b}_k \leq - \bm{b}_i \cdot \bm{b}_{k+1}
\end{equation}

Let $\bm{A}$ be the $ k \times k $ dot-product matrix of the first $k$ basis vectors. This means $\bm{A}_{ij} = \bm{A}_{ji} = \bm{b}_i \cdot \bm{b}_j$.  Let $\bm{c}$ be the vector of the right-hand sides, such that $c_i = - \bm{b}_i \cdot \bm{b}_{k+1}$

The system can now be written in matrix/vector form as

\begin{equation}
    \bm{A} \bm{v} \leq \bm{c}
\end{equation}

where $\bm{v}$ is an integer vector and the inequality is understood to apply component-wise.

\begin{center}
\begin{algorithm}[H]\label{obtuse1}
\DontPrintSemicolon
\KwInput{A basis $\mathcal{B}=\{\bm{b}_1,\dots,\bm{b}_n\}$}
\KwOutput{An obtuse basis $\mathcal{B}_o=\{\bm{b}'_1,\dots,\bm{b}'_n\}$}

Let $\mathcal{C} = \{ \bm{b}_{j_1},\bm{b}_{j_2},\dots,\bm{b}_{j_k} \}$ be a clique of mutually obtuse basis vectors\\
\While{$|\mathcal{C}|<n$}{
    Find the vector $\bm{b}_i$ in $\mathcal{B}$ which is not in $\mathcal{C}$ and is obtuse with the maximum number, $t$ of vectors in $\mathcal{C}$ \\
    \If{$k == t$} {
        Push $\bm{b}_i$ in  $\mathcal{C}$ \\
    }
    \Else{
        Calculate $\bm{A}$, the dot product matrix of $\mathcal{C}$ \\
        Calculate $\bm{c}_0 = \{-\bm{b}_i \cdot \bm{b}_{j_l} : \forall 1\leq l\leq k$ \\
        Calculate $\bm{c}_1$ = \{ $c_{0,j} - \norm{\bm{b}_{j_l}} : \forall 1\leq l\leq k$ \\
        Calculate $\bm{p}_0$ = $\bm{A}^{-1}\bm{c}_0$ and $\bm{p}_1$ = $\bm{A}^{-1}\bm{c}_1$ \\
        Sample between points $\bm{p}_0$ and $\bm{p}_1$ to find the integer point $\bm{v}$ that is the closest to $\bm{p}_0$ that satisfies $\bm{A}\bm{v} \leq \bm{c}_0$ \\
        Push $\bm{b}'_i = \bm{b}_i + \bm{v}_1 \bm{b}_{j_1} + \bm{v}_2 \bm{b}_{j_2}+ \dots +\bm{v}_k \bm{b}_{j_k}$ into $\mathcal{C}$
    }
    $k=k+1$
}
Return basis $\mathcal{C}$
\caption{Iterative Method to make Basis Obtuse}
\end{algorithm}
\end{center}

The complete algorithm is summarized in Algorithm \ref{obtuse1}, we will explain the parts of the algorithm in detail in the coming subsections.

\subsection{Initial Point of Sampling}
As discussed above, we need an integer point $\bm{v}$ such that $\bm{A}\bm{v} \leq \bm{c} $. Let us suppose $k$ to be the length of the vector $\bm{v}$. This inequality results in a geometric region inside a $k$-dimensional hyperspace, where $\bm{v}$ can lie. We will call this region the correct region in the subsequent text.

\begin{definition}
    For a set of inequalities with $k$ variables, we define the \textbf{correct region} to be the set points on $k$-dimensional hyperspace which satisfy the set of inequalities $\bm{A}\bm{v}\leq\bm{c}$.
\end{definition}

Let us call the point $\bm{A}^{-1}\bm{c}$ for $\bm{p}_0$. As $\bm{p}_0$ may not be an integer point, we ought to find an integer point which is in the correct region and as close to $\bm{p}_0$ as possible. The inequalities are linear, which means that every row of $\bm{A}$ along with the corresponding element of $\bm{c}$ represents a hyper-plane.   Each hyper-plane divides the space into two halves, where one part is where our point can be.  Hence, the complete inequality set results in $k$ planes and $2^k$ regions.  We also need the vector $\bm{v}$ to be as close to $\bm{p}_0$ as possible to keep the new basis vector as orthogonal as possible to the others.

We propose a heuristic approach to solve this ILP problem. In the geometric region depicted by the inequality $\bm{A}\bm{v} \leq \bm{c}$, we first find the line, we call it the \emph{Center Axis} in the subsequent text, such that for any point on the line, its distance from each boundary hyper-plane is equal. We do so to make sure that when we round a point on the center axis, we are certain with a high probability that the point we get is still in the correct region.  Figure \ref{fig:3dregion} shows an image of three different planes in $\mathbb{R}^3$ and the center axis of the correct region.

\begin{figure}
    \centering
    \includegraphics[width=.6\textwidth]{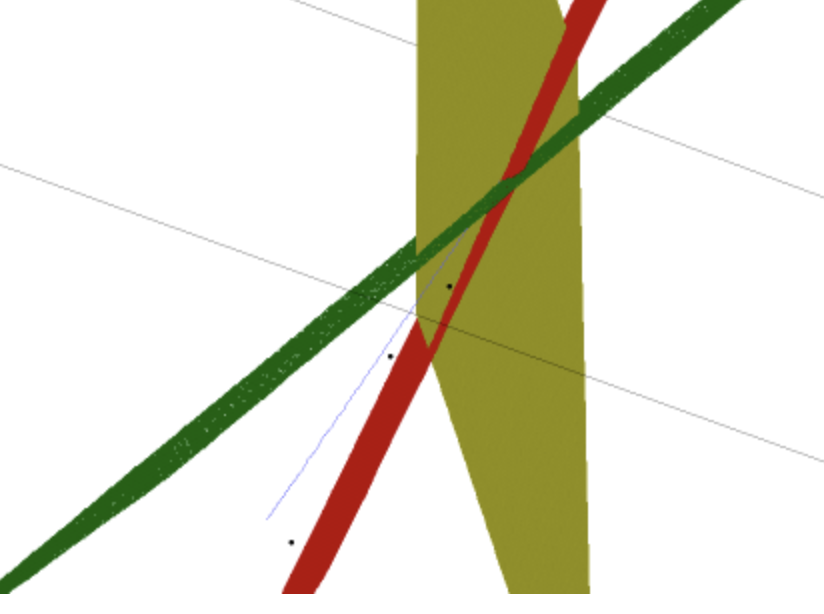}
    \caption{The blue line is the center axis of one region defined by three planes.  The black dots are integer points, rounded from sampled points on the center axis.}
    \label{fig:3dregion}
\end{figure}

To find the center axis, we have to equate the distance of any point on the line to the planes. 
\begin{lemma}
    For a hyper-plane defined by $\bm{t} \cdot \bm{x} - c = 0$ and a point $\bm{p}$, the orthogonal (or minimum) distance between the plane and the point is $\frac{\bm{t} \cdot \bm{p} - c}{\norm{\bm{t}}}$. The sign of the value depicts the side of the plane where the point lies. That is, if the signs of this value is positive for two points, then the points are on the same side of the plane.
\end{lemma}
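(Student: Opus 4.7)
The plan is to treat this as a standard vector-geometry computation, split into a distance part and a sign part.

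For the distance, I would first observe that $\bm{t}$ is a normal vector to the hyperplane: any two points $\bm{x}_1,\bm{x}_2$ lying on it satisfy $\bm{t}\cdot(\bm{x}_1-\bm{x}_2)=c-c=0$, so $\bm{t}$ is orthogonal to every direction inside the plane. The shortest segment from $\bm{p}$ to the plane must therefore be parallel to $\bm{t}$, so I would parametrize the perpendicular line through $\bm{p}$ as $\bm{p}+\lambda\bm{t}$, substitute into $\bm{t}\cdot\bm{x}=c$, and solve for $\lambda^\ast=(c-\bm{t}\cdot\bm{p})/\norm{\bm{t}}^2$. The foot $\bm{q}=\bm{p}+\lambda^\ast\bm{t}$ then lies on the plane, and
$\norm{\bm{p}-\bm{q}}=|\lambda^\ast|\cdot\norm{\bm{t}}=|\bm{t}\cdot\bm{p}-c|/\norm{\bm{t}}$, which matches the lemma's formula in absolute value.

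I would then confirm minimality: for any other point $\bm{q}'$ on the plane, write $\bm{p}-\bm{q}'=(\bm{p}-\bm{q})+(\bm{q}-\bm{q}')$. The second summand lies in the plane and so is orthogonal to $\bm{t}$, while the first is a scalar multiple of $\bm{t}$, so their dot product vanishes and Pythagoras gives $\norm{\bm{p}-\bm{q}'}^2=\norm{\bm{p}-\bm{q}}^2+\norm{\bm{q}-\bm{q}'}^2\geq\norm{\bm{p}-\bm{q}}^2$, with equality iff $\bm{q}'=\bm{q}$.

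For the sign claim, I would use the intermediate value theorem. Given two points $\bm{p}_1,\bm{p}_2$, consider the scalar function $f(\tau)=\bm{t}\cdot((1-\tau)\bm{p}_1+\tau\bm{p}_2)-c$, which is affine and therefore continuous on $[0,1]$ with values $f(0)=\bm{t}\cdot\bm{p}_1-c$ and $f(1)=\bm{t}\cdot\bm{p}_2-c$. If these had opposite signs, $f$ would vanish somewhere on the segment, so the segment would cross the hyperplane and the two points would lie in different connected components of $\mathbb{R}^d\setminus\{\bm{x}:\bm{t}\cdot\bm{x}=c\}$. The contrapositive gives the claim. There is really no substantive obstacle here — it is a textbook fact — and the only subtlety worth being explicit about is interpreting ``side'' as membership in the same connected component of the hyperplane's complement, which the sign of $\bm{t}\cdot\bm{p}-c$ is precisely designed to detect.
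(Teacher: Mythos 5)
Your argument is correct and complete: the normal-line parametrization plus the Pythagorean minimality argument gives the (signed) distance formula, and the intermediate-value/connected-component argument correctly formalizes the sign claim. Note that the paper itself states this lemma without any proof, treating it as a standard textbook fact, so there is nothing to compare against; your write-up simply supplies the routine verification the authors omitted, and your remark that the stated formula is really the \emph{signed} distance (the true distance being its absolute value) is a fair and accurate reading of the statement.
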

To also incorporate the right region, we keep track of the sign of the inequality too. For a $k$-dimensional system of linear inequalities, there are $2^{k}$ center axes, but we need that one line which is in the correct region. Let $\bm{A}_i$ represent the $i$-th row of the matrix $\bm{A}$. So, for a point $\bm{x}$ on the center axis we have

\begin{equation}\label{eq:distPointPlane}
    \frac{\bm{A}_1 \cdot \bm{x} - c_1}{\norm{\bm{A}_1}} = \frac{\bm{A}_2 \cdot \bm{x} - c_2}{\norm{\bm{A}_2}} = \dots = \frac{\bm{A}_k \cdot \bm{x} - c_k}{\norm{\bm{A}_k}} = \lambda
\end{equation}

Notice that for the numerators, we have not calculated the absolute value because we want to incorporate the fact that we want to keep the line in the correct region. We will now derive the parametric form of the center axis -

\begin{equation*}
    \bm{A}_1 \cdot \bm{x} - c_1 = \lambda \norm{\bm{A}_1}
\end{equation*}
\begin{equation*}
    \bm{A}_2 \cdot \bm{x} - c_2 = \lambda \norm{\bm{A}_2}
\end{equation*}
\begin{equation*}
    \vdots    
\end{equation*}
\begin{equation*}
    \bm{A}_k \cdot \bm{x} - c_k = \lambda \norm{\bm{A}_k}
\end{equation*}
\begin{equation*}
    \bm{A}\bm{x} - \bm{c} = \lambda \begin{bmatrix}\norm{\bm{A}_1}\\\norm{\bm{A}_2}\\\vdots\\\norm{\bm{A}_k}\end{bmatrix}
\end{equation*}
\begin{equation*}
    \bm{x} = \bm{A}^{-1} (\bm{c} + \lambda  \begin{bmatrix}\norm{\bm{A}_1}\\\norm{\bm{A}_2}\\\vdots\\\norm{\bm{A}_k}\end{bmatrix})
\end{equation*}

Let us define $\bm{m}$ as
\begin{equation*}
    \bm{m} = \begin{bmatrix}m_1\\m_2\\\vdots\\m_k\end{bmatrix} =  \bm{A}^{-1}\begin{bmatrix}\norm{\bm{A}_1}\\\norm{\bm{A}_2}\\\vdots\\\norm{\bm{A}_k}\end{bmatrix}
\end{equation*}

Hence the equation becomes
\begin{equation*}
    \bm{x} = \bm{A}^{-1}\bm{c} + \lambda \bm{m}
\end{equation*}

Here, $\lambda$ is the parameter of the center axis.  From now on, we will call the point on the line with parameter $\lambda$ as $\bm{p}_{\lambda}$, and the point where all coefficients of a point $\bm{p}$ has been rounded to their nearest integer will be denoted by $\lfloor \bm{p}\rceil$. For the point to be in the correct region i.e $\bm{A}\bm{x} \leq \bm{c}$, we need $\lambda \leq 0$. Now, as we increase the absolute value of $\lambda$, our distance from the point $\bm{A}^{-1}\bm{c}$ increases, but the probability of $\lfloor \bm{p}_{\lambda}\rceil$ being inside the region after rounding increases. 

\begin{lemma}\label{lambda}
    $\lfloor\bm{p}_{\frac{-\sqrt{k}}{2}}\rceil$ is in the correct region
\end{lemma}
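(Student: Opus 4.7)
The plan is to use the geometric meaning of the center axis directly. By the defining equation of the center axis (equation \eqref{eq:distPointPlane}), every point $\bm{p}_\lambda$ satisfies $\bm{A}_i\cdot\bm{p}_\lambda - c_i = \lambda\norm{\bm{A}_i}$ for each $i$, so its signed distance to the $i$-th bounding hyperplane equals $\lambda$ uniformly. Choosing $\lambda = -\sqrt{k}/2$ therefore places $\bm{p}_\lambda$ inside the correct region with a uniform safety margin of $\sqrt{k}/2$ to each facet. The job is then to show that rounding cannot cross any facet.

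For this I would bound the rounding perturbation. Set $\bm{\delta} = \lfloor\bm{p}_\lambda\rceil - \bm{p}_\lambda$; each coordinate of $\bm{\delta}$ lies in $[-1/2, 1/2]$, so $\norm{\bm{\delta}}^2 \le k/4$ and hence $\norm{\bm{\delta}} \le \sqrt{k}/2$. This is precisely the reason the constant $\sqrt{k}/2$ appears in the statement: the safety margin is calibrated to the worst-case Euclidean length of the rounding error.

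With these two ingredients the proof finishes by a one-line calculation for each row. For $i \in [k]$,
\[\bm{A}_i\cdot\lfloor\bm{p}_\lambda\rceil - c_i = \bigl(\bm{A}_i\cdot\bm{p}_\lambda - c_i\bigr) + \bm{A}_i\cdot\bm{\delta} = -\tfrac{\sqrt{k}}{2}\norm{\bm{A}_i} + \bm{A}_i\cdot\bm{\delta},\]
and Cauchy--Schwarz gives $\bm{A}_i\cdot\bm{\delta} \le \norm{\bm{A}_i}\,\norm{\bm{\delta}} \le \tfrac{\sqrt{k}}{2}\norm{\bm{A}_i}$, making the right-hand side non-positive. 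Doing this for every $i$ shows $\bm{A}\lfloor\bm{p}_\lambda\rceil \le \bm{c}$, i.e.\ $\lfloor\bm{p}_\lambda\rceil$ lies in the correct region.

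No step is truly hard; the only subtlety is that the two bounds used (the maximum $\ell_2$ length of a coordinate-wise rounding error on $\mathbb{R}^k$, and Cauchy--Schwarz) are both essentially tight, so $\sqrt{k}/2$ cannot be shrunk by a worst-case argument. This also motivates why Algorithm~\ref{obtuse1} samples intermediate points between $\bm{p}_0$ and $\bm{p}_1$ rather than always jumping to $\lambda = -\sqrt{k}/2$: in generic instances one can get much closer to $\bm{p}_0$ and still land in the correct region.
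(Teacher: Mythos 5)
Your proposal is correct and follows essentially the same approach as the paper: both arguments combine the fact that $\bm{p}_\lambda$ sits at uniform signed distance $\lambda$ from every bounding hyperplane with the bound $\norm{\lfloor\bm{p}_\lambda\rceil - \bm{p}_\lambda} \le \tfrac{\sqrt{k}}{2}$ on the rounding error. Your write-up is in fact slightly more rigorous than the paper's, which asserts the conclusion geometrically where you make the per-facet Cauchy--Schwarz step explicit.
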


\begin{proof}
    On rounding a point, we change each of the point's co-ordinates by $0.5$ at the maximum, so for a $k$ dimensional point, the distance between the point and its rounding can be at most $\sqrt{0.5^2 + 0.5^2.....} = 0.5\sqrt{k}$. This distance should be less than the distance between the point and the hyper-planes bounding the region. As we see in equation (\ref{eq:distPointPlane}), the distance between $\bm{p}_{\lambda}$ and the planes is $\lambda$, so if $\lambda = 0.5\sqrt{k}$, then $\lfloor\bm{p}_{\lambda}\rceil$ will be inside the correct region. $\qed$
\end{proof}

This point is a good starting point to search for the optimal point. In the next section, we discuss our sampling techniques to find an optimal $\lambda$ such that $\norm{\bm{p}_{\lambda}-\bm{p}_0}$ is minimized while keeping $\lfloor\bm{p}_{\lambda}\rceil$ in the correct region.

\subsection{Sampling}

In the last section we found an integer point which is guaranteed to be in the correct region. The distance between this point and the $\bm{p}_0$ point is large and there are many other points closer to $\bm{p}_0$ that are also in the correct region after rounding. To find those points, we use a sampling technique. We have an upper limit of the absolute value of $\lambda$ to be $0.5\sqrt{k}$ and a lower limit of $0$. We have to choose the value between these limits and see if the rounded off point is still in the right region. The smaller the value of $\lambda$, the closer the point is to $\bm{p}_0$.

\begin{figure}
    \centering
    \includegraphics[width=.6\textwidth]{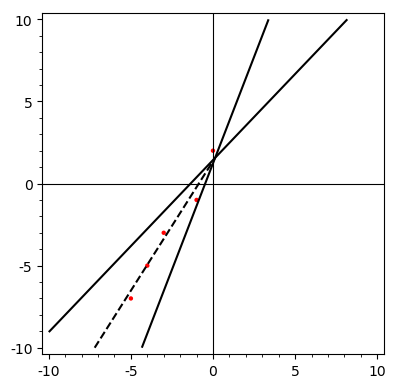}
    \caption{The dashed line represents the center axis of the correct region and the red dots represent the sampled points using $\delta = 0.5$}
    \label{fig:sampling}
\end{figure}

We have two approaches for sampling. The first approach is to determine a number $\delta$ and then starting from $0$ we decrease the value of $\lambda$ by $\delta$ to see when $\lfloor\bm{p}_{\lambda}\rceil$ is in the correct region.

\begin{center}
\begin{algorithm}[H]\label{sampling1}
\DontPrintSemicolon
\KwInput{Matrix $\bm{A}$, RHS Vector $\bm{c}$, Step-size $\delta$}
\KwOutput{Point $\bm{p}$ for which $\bm{A}\bm{p} \leq \bm{c}$ and $\norm{\bm{p} - \bm{p}_0}$ is minimized}
Let $\lambda = 0$\\
\While{$\lambda \geq -0.5\sqrt{k}$}{
    Calculate $\bm{p}_\lambda = \bm{A}^{-1}\bm{c} + \lambda \bm{m}$ \\
    Set $\bm{r} = \lfloor \bm{p}_\lambda \rceil$ \\
    \If{$\bm{A}\bm{r} \leq \bm{c}$}{
        Return $\bm{r}$
    }
    Update $\lambda = \lambda - \delta$
}
\caption{First Sampling Method to find $\lfloor\bm{p}_{\lambda}\rceil$ in the correct region}
\end{algorithm}
\end{center}

Now, the question of determining $\delta$ arises. As we see in the formula, $\lambda$ is multiplied with $\bm{m}$. So if we make sure that on changing the value of $\lambda$ by $\delta$ we change the point's coordinates by at most 1, we are not progressing too fast along the center axis.

\begin{lemma}
    Let $\epsilon$ be the maximum of the absolute values of elements in the vector $\bm{m}$.  With $\delta = 1/\epsilon$, Algorithm \ref{sampling1} will find the integer point within a distance of $\sqrt{k}$ from the integer point closest to $\bm{p}_0$.
\end{lemma}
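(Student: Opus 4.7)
The core of the proof should be a single-step estimate of how much the rounded candidate moves when $\lambda$ is decreased by $\delta$. With the choice $\delta = 1/\epsilon$, consecutive centre-axis points satisfy $\bm{p}_{\lambda-\delta} - \bm{p}_\lambda = -\delta\bm{m}$, and by the definition of $\epsilon$ every coordinate of $\delta\bm{m}$ has absolute value at most $\delta\epsilon = 1$. Nearest-integer rounding changes each coordinate by at most $1/2$, so the two rounded candidates $\lfloor\bm{p}_\lambda\rceil$ and $\lfloor\bm{p}_{\lambda-\delta}\rceil$ still differ by at most $1$ in each of the $k$ coordinates. Hence their Euclidean distance is bounded by $\sqrt{k}$, since a displacement in $\{-1,0,1\}^k$ has $\ell_2$-norm at most $\sqrt{k}$.

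Given this one-step bound, I would finish as follows. Lemma \ref{lambda} guarantees that the while-loop terminates by the time $\lambda$ reaches $-\sqrt{k}/2$, so there is a well-defined first value $\lambda^*$ at which the algorithm returns $\bm{r} = \lfloor\bm{p}_{\lambda^*}\rceil$. If $\lambda^* = 0$, then the returned point coincides with $\lfloor\bm{p}_0\rceil$ and the statement is trivial. Otherwise the sample examined immediately before $\bm{r}$ is $\bm{q} = \lfloor\bm{p}_{\lambda^*+\delta}\rceil$; by construction this was the integer candidate closest to $\bm{p}_0$ inspected by the algorithm, and the single-step bound from the previous paragraph gives $\norm{\bm{r}-\bm{q}} \leq \sqrt{k}$, which is the desired conclusion.

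The only delicate point I anticipate is pinning down the intended reading of ``the integer point closest to $\bm{p}_0$''. The interpretation that makes the statement hold in general is that it is the last rounded centre-axis candidate $\bm{q}$ inspected before feasibility is first attained; this is the best approximation to $\lfloor\bm{p}_0\rceil$ the algorithm actually visits, and it reduces to $\lfloor\bm{p}_0\rceil$ itself in the edge case $\lambda^* = 0$. Once that reading is fixed, the entire argument boils down to the identity $\delta\epsilon = 1$ together with the elementary $\ell_2$-norm bound on $\{-1,0,1\}^k$; no further calculation is required, and in particular no hypothesis on $\|\bm{m}\|$ or the geometry of the correct region beyond what Lemma \ref{lambda} already supplies is needed.
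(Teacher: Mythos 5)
Your core estimate is exactly the paper's: with $\delta=1/\epsilon$ each decrement of $\lambda$ moves every coordinate of $\bm{p}_\lambda$ by at most $1$, so one step can shift the rounded candidate by a vector in $\{-1,0,1\}^k$, of length at most $\sqrt{k}$. Where you diverge is in what the $\sqrt{k}$ is measured against. The paper asserts that the returned point is "adjacent" to \emph{the most optimal point}, i.e.\ to the feasible integer point nearest $\bm{p}_0$, and leaves that adjacency claim at the heuristic level; you instead redefine the reference point to be the last rounded centre-axis candidate $\bm{q}=\lfloor\bm{p}_{\lambda^*+\delta}\rceil$ inspected before feasibility is attained. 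Your reading makes the final step rigorous but proves a genuinely weaker statement: $\bm{q}$ is not in general the integer point closest to $\bm{p}_0$ (that is $\lfloor\bm{p}_0\rceil$ unconstrained, or the ILP optimum if feasibility is required), and under either of those natural readings your argument, like the paper's, does not close the gap — the true optimum need not lie on or near the rounded centre axis at all, so a step-size bound along the axis cannot by itself control the distance to it. Two smaller points: your deduction "each coordinate of $\delta\bm{m}$ is at most $1$ and rounding moves each coordinate by at most $1/2$, hence the rounded candidates differ by at most $1$ per coordinate" is not valid as stated (the triangle inequality only gives $2$); the correct justification is that the difference of the roundings is an integer and can reach $2$ only if the unrounded difference is at least $1$, which happens only at exact ties. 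Also, Lemma \ref{lambda} guarantees feasibility at $\lambda=-\sqrt{k}/2$ exactly, which need not be one of the sampled values $-j\delta$, so termination with a returned point requires a word more than you (or the paper) give. On balance: same key idea, cleaner bookkeeping of the loop on your side, but a reinterpretation of the lemma's conclusion rather than a proof of it.
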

\begin{proof}
    Without loss of generality, let us assume that $m_1$ has the greatest magnitude among all $m_i$. On decreasing $\lambda$ by $\delta$, we see a change of
    \begin{equation*}
        \Delta p_\lambda = p_\lambda - p_{\lambda-\delta} = \delta \bm{m}
    \end{equation*}
    On setting $\delta = 1/m_1$,
    \begin{equation*}
        \Delta p_\lambda = \frac{\bm{m}}{m_1} = \begin{bmatrix}1\\m_2/m_1\\\vdots\\m_k/m_1\end{bmatrix}
    \end{equation*}
    As $m_1$ is the largest number, all the elements of $\bm{m}$ are less than or equal to $1$.  So with every step we take with $\delta = 1/m_1$, the change in any co-ordinate is at most 1. After the rounding, it might be the case that the rounded off point is not the most optimal point, but might be one of the adjacent points. Hence, the distance between this point and the integer point closest to $\bm{p}_0$ would be $\sqrt{k}$.
    \qed
\end{proof}

The second method is performing a binary search of the value of $\lambda$ over the interval $(-0.5\sqrt{k}, 0)$. One assumption we make here is that if for a given $\lambda_o$, the point $\lfloor\bm{p}_{\lambda_o}\rceil$ is not in the correct region, then for all values of $\lambda$ in the interval $(\lambda_o, 0)$, the point $\lfloor\bm{p}_{\lambda}\rceil$ would also be out of the correct region.

\begin{center}
\begin{algorithm}[H]\label{sampling2}
\DontPrintSemicolon
\KwInput{Matrix $\bm{A}$, RHS Vector $\bm{c}$}
\KwOutput{A point $\bm{p}$ for which $\bm{A}\bm{p} \leq \bm{c}$ and distance between $\bm{p}$ and $\bm{p}_0$ is minimized}
Let $\lambda_L = -0.5\sqrt{k}$, $\lambda_U = 0$ \\
\While{value of $\lfloor \bm{p}_{\lambda_M} \rceil$ is unchanged for $4$ consecutive steps}{
    Let $\lambda_M = 0.5(\lambda_L + \lambda_U)$ \\
    Calculate $\bm{p}_{\lambda_M} = \bm{A}^{-1}\bm{c} + \lambda_M \bm{m}$ \\
    \If{$\bm{A} \lfloor \bm{p}_{\lambda_M} \rceil \leq \bm{c}$}
    {
        Update $\lambda_L = \lambda_M$
    }
    \Else{
        Update $\lambda_U = \lambda_M$
    }
}
Return $\bm{p}_{\lambda_L}$
\caption{Second Sampling Method to find the value of $\lambda$}
\end{algorithm}
\end{center}

The very obvious limitation of this method is that the assumption is not true for all the cases. It is not necessarily true that if a point's rounding is not in the correct region, the points with smaller values of $\lambda$ will never have a rounded off point in the correct region.  But this method is essentially faster than the first method because the number of steps is much smaller.

\section{Testing the idea in practice}\label{sec:implementation}

For testing our findings, we have implemented optimized versions of the algorithms explained in this paper. We wrote the code in both Julia and C++, but due to non-mutability of the BigFloats in Julia, we had to drop the development in Julia. The code worked fine for normal floating point numbers. In C++, we wrote the code on top of the fplll library \cite{fplll}, which is an optimized library for running common lattice algorithms on floating point lattices.

Before running Algorithm \ref{obtuse1}, we LLL-reduce the basis. Hence, we plugged our code between the LLL function call and the SVP function call. Following is the list of functions and helper functions developed by us for testing this idea

\begin{enumerate}
    \item \textbf{convert\_to\_obtuse} - A wrapper function for the complete algorithm. Takes a basis matrix as an input, and runs the functions \textbf{linsolve\_add\_to\_clique} and \textbf{sign\_flip\_add\_to\_clique} alternately until the basis is made obtuse.
    \item \textbf{linsolve\_add\_to\_clique} - Finds the vector which is not in clique but has maximum edges with the points in clique among all non-clique vectors and transforms the vector linearly so that the vector becomes obtuse to all other vectors in clique, using the approach described in Section \ref{iterativealg}.
    \item \textbf{sign\_flip\_add\_to\_clique} - Tries to find the vectors which can be included in the clique just by flipping the sign of the elements.
\end{enumerate}

Other helper functions, which do not represent the algorithm but are important to the algorithm are listed below

\begin{enumerate}
    \item \textbf{get\_maximal\_clique} - Takes in basis matrix as input and tries to make an approximate maximal clique.
    \item \textbf{generate\_dot\_product\_matrix} - Takes in basis matrix as input and calculates dot product matrix.
    \item \textbf{update\_dot\_product\_matrix} - Takes in basis matrix, dot product matrix and index and updates the dot product matrix due to changes in the basis vector at the given index.
    \item \textbf{negate\_dot\_matrix\_product} - Takes in dot product matrix and index and updates the dot product matrix due to negation of the basis vector at the given index.
    \item \textbf{norm} - Calculates norm of a vector
    \item \textbf{is\_AX\_less\_than\_b} - Takes in matrix \textbf{A}, vector \textbf{x} and \textbf{b} and returns true if $\textbf{A}\textbf{x} \leq \textbf{b}$
    \item \textbf{linsolve} - Takes in matrix \textbf{A}, vector \textbf{x} and \textbf{b} and sets $\textbf{x} = \textbf{A}^{-1}\textbf{b}$
\end{enumerate}

We tested the code on lattices of different sizes - 5, 10 and 40. We will present the results on the basis of dimensions 10 in detail below. The immediate issue we realized in this approach was that the size of the entries of the basis increased very quickly. This was not very visible in the basis of dimensions 5, but was very prominent in the 40-dimensional basis.

To show how the basis changes, we will now do some steps of the algorithm on the LLL-reduced lattice given below.

\begin{equation*}
    \mathcal{B} = \begin{bmatrix}
2&-2& -6& 10& -2& 3& 0& -4& 5& 3 \\
3& 2& 11& 2& 1& 2& -7& -3& -1& -13 \\
3& -2& -5& 7& -3& 2& 5& 16& 5& 5 \\
7& 8& 1& -2& -3& 10& 10& 5& 1& 1 \\
11& -2& 2& 12& 7& 4& 8& 0& -6& -8 \\
12& 3& 5& -5& -7& -7& -7& -7& 3& 2 \\
5& 15& 4& 6& 2& -9& 1& -3& -3& -2 \\
6& -11& -6& 0& -9& -6& 14& 2& 0& -5 \\
-1& 10& -12& -2& -2& -2& 4& 0& -15& -6 \\
6& 1& -19& -8& 18& -8& 3& -2& 12& -4 \\
\end{bmatrix}
\end{equation*}

Here, every row of the matrix $\mathcal{B}$ represents a basis vector, but we might transpose it to make the matrix fit within the margins of the paper.  We also use 0-indexing in the subsequent text to maintain consistency with the code.

On running the function \textbf{get\_maximal\_clique} on the basis, we find that the vector represented by the 0th, 1st, 3rd and 4th row of the matrix $\mathcal{B}$ form a clique. Hence, these vectors would form the initial clique. In the first iteration call of the function \textbf{linsolve\_add\_to\_clique}, we see that the vector represented by the 8th row of $\mathcal{B}$ had the maximum number of edges with the clique and hence we will add it to the clique in this step. For this vector, we find the value of $\bm{v}$, as described in Section \ref{iterativealg} to be
\begin{equation*}
    \bm{v} = \begin{bmatrix}-1&-1&-1&-1\end{bmatrix}
\end{equation*}
using the first sampling technique and the updated value of the vector is
\begin{equation*}
    \bm{b}'_8 = \begin{bmatrix}-2&0 &-16&0& 9& -13& 9& 2& -26& -5\end{bmatrix}
\end{equation*}
The clique now contains the vectors at 0th, 1st, 3rd, 4th and 8th row.  The reader can confirm that these vectors are indeed obtuse with each other i.e. dot product of any two of these vectors is less than or equal to 0.
 Following the similar procedure, we transform the remaining basis vectors and the resulting basis is given below

\begin{equation*}
    \mathcal{B}' = \begin{bmatrix}
2&3&2&7&-11&21&-908&21224&-2&184569532\\
-2&2&-30&8&2&-13&-89&-70911&0&163550688\\
-6&11&21&1&-2&24&14&-28215&-16&-60219038\\
10&2&3&-2&-12&-38&85&-38992&0&5511663\\
-2&1&14&-3&-7&31&425&-29845&9&415682694\\
3&2&-8&10&-4&2&634&45139&-13&340218771\\
0&-7&2&10&-8&32&240&-13612&9&-546575041\\
-4&-3&22&5&0&-87&-42&829&2&130456108\\
5&-1&19&1&6&-2&16&-17434&-26&-256738052\\
3&-13&15&1&8&38&-179&-22755&-5&385918402
    \end{bmatrix}^T
\end{equation*}

To show the difference between the two sampling methods, we ran the same algorithm but with the second sampling technique.  The resulting obtuse lattice, given below, contained much bigger numbers.

\begin{equation*}
    \mathcal{B}' = \begin{bmatrix}
2&3&9&7&-11&57&-52201&81940149&-2&2733250434736588\\
-2&2&-62&8&2&-19&-4677&-262905235&0&2423455627319902\\
-6&11&38&1&-2&117&-1272&-104822218&-16&-891421475978208\\
10&2&9&-2&-12&-236&7283&-145084227&0&82270071682063\\
-2&1&37&-3&-7&198&21389&-112292955&9&6157026433264479\\
3&2&-16&10&-4&90&35902&165393053&-13&5038127192105278\\
0&-7&14&10&-8&192&11367&-51424902&9&-8094917415248580\\
-4&-3&31&5&0&-680&4306&3274983&2&1932127931520721\\
5&-1&28&1&6&-46&1003&-64792422&-26&-3802176232488309\\
3&-13&30&1&8&181&-12733&-83828280&-5&5716081663738074
    \end{bmatrix}^T
\end{equation*}

As we can see, in both cases the entries in the basis vectors grow exponentially.  For higher dimensions the numbers quickly become too big to handle, so enumeration can not be used on these lattice bases.

\section{Conclusions}\label{sec:conc}
As we saw, the problem of increasing size of the entries in the modified basis vectors is a significant one. Converting a basis to an obtuse basis do have benefits, but also removes the effect of lattice reduction algorithms. Reduction techniques which preserve the obtuse property of the basis should be the next algorithms to be studied and developed for exploiting the properties of obtuse bases. Another set of algorithms to be explored are those concerning approximate obtuse lattices, that is, where a subset of basis vectors are obtuse with each other. The exponential speedup given by obtuse bases is very significant, so further study can be important.

\bibliography{mybibfile}
\bibliographystyle{splncs04}

\end{document}